\documentclass[10pt]{article}
\usepackage[T1]{fontenc}
\usepackage{graphicx}
\usepackage{booktabs}
\usepackage{tabularx}
\usepackage{longtable}
\usepackage{mathtools}
\usepackage[ruled,vlined]{algorithm2e}
\usepackage{makecell}

\usepackage{comment}
\usepackage{diagbox}
\usepackage{enumitem}
\usepackage[justification=centering]{caption}
\newcommand{\Fq}{\mathbb{F}_q}
\newcommand{\ZZ}{\mathbb{Z}_4}
\newcommand{\C}{\mathcal{C}}
\usepackage{amsmath,amsfonts}
\usepackage{amsthm}
\usepackage{url}
\usepackage{cleveref}
\usepackage{geometry}
\usepackage{color,xcolor,soul}

\newtheorem{theorem}{Theorem}[section]
\newtheorem{definition}[theorem]{Definition}
\newtheorem{proposition}[theorem]{Proposition}

\begin{document}

\title{New Codes from Cyclic and Negacyclic Codes of Even Length over $\mathbb{Z}_4$
  \thanks{This research was supported by Kenyon College Summer Science Scholars
  research program. Nuh Aydin is the corresponding author.}}

\author{Nuh Aydin \and Mohamed O. Belghith \and Godwin Idowu \and
        Trang T.T. Nguyen \and Long B. Tran\\
  Kenyon College, Gambier OH 43022, USA}

\date{}

\maketitle

\begin{abstract}
This paper uses theoretical results previously established in the literature to design search algorithms to find new linear codes over $\mathbb{Z}_4$ from cyclic and negacyclic codes of even length. As a result of these searches, we have found 2500 new cyclic codes and 730 negacyclic codes. These new codes exhibit improved parameters compared to previously known codes. Additionally, we have obtained binary quantum codes with good parameters from such $\mathbb{Z}_4$ codes.
\end{abstract}

\noindent\textbf{Keywords:} Cyclic codes, Constacyclic, Quasi-cyclic codes, Quasitwisted cyclic codes

\noindent\textbf{MSC codes:} 68Q25, 68R10, 68U05

\section{Introduction and Motivation}
After the discovery that some good non-linear binary codes can be obtained as images of $\mathbb{Z}_4$-linear codes (also called quaternary codes in the literature) in \cite{Hamm1994}, research on codes over $\mathbb{Z}_4$ and codes over many other finite rings exploded. 
Codes over rings became a prominent topic in algebraic coding theory. Given the influence of the publication \cite{Hamm1994}, much subsequent research  on codes over  $\mathbb{Z}_4$, and the fact that many of the rings studied are extensions of  $\mathbb{Z}_4$, codes over $\mathbb{Z}_4$ attained a prominent place in algebraic coding theory. As a result, a database of best known codes over $\mathbb{Z}_4$ was established \cite{aydin2009database}, similar to the one on linear codes over fields \cite{database}.  The initial database  \cite{aydin2009database} was relocated and expanded in a recent work \cite{database2022}.
 In the process of updating the database, researchers found thousands of new linear codes over $\mathbb{Z}_4$ from cyclic codes of odd length, and quasi-cyclic (QC) codes that used those cyclic codes as building blocks (or seeds). This work is a continuation of the work done  in  \cite{database2022}. In this work, our aim is to expand the earlier work  and to find more $\mathbb{Z}_4$-linear codes with better parameters for the database. We conducted computer searches on cyclic codes of even length over $\mathbb{Z}_4$, both free  and non free,  as well as negacyclic codes of even length. Additionally, we also  constructed binary quantum codes derived from  cyclic and negacyclic codes over $\mathbb{Z}_4$.\\

The paper is organized as follows. We start  with some basic facts and definitions in \cref{sec:bd} about codes over $\ZZ$. Then, we summarize basic facts about cyclic codes and some of their generalizations in section 3. We explain the theory we use in producing the algorithms in \cref{sec:sp}. Finally, we present the results of computer searches in \cref{sec:cr}.

\section{Basic Definitions}
\label{sec:bd}

\subsection{$\mathbb{Z}_4$ codes}

A code  $\mathcal{C}$ of length $n$ over $\mathbb{Z}_4$ is a subset of \(\mathbb{Z}_4^n\), and a linear code  is a submodule (equivalently, an additive subgroup) of \(\mathbb{Z}_4^n\). Elements of $\mathcal{C}$ are called codewords. A matrix whose rows generate $\mathcal{C}$ is called a generator matrix.
Any linear code  over $\ZZ$ is equivalent to a code  with a generator matrix $G$ of the form
\begin{align}\label{eq:gen}
    \begin{bmatrix}
    I_{k_1} & A & B \\
    0 & 2I_{k_2} & 2C
    \end{bmatrix}
\end{align}
where $A$ and $C$ are $\mathbb{Z}_2$-matrices and $B$ is a $\ZZ$-matrix. We  say that $\mathcal{C}$ is of type $4^{k_1}2^{k_2}$, which is also the size of $\mathcal{C}$. We denote the parameters of $\mathcal{C}$ as $[n,k_1,k_2,d_L]$ or just $[n,k_1,k_2,d]$, where $n$ is the length of $\mathcal{C}$ and $d_L$ or $d$ is the minimum Lee distance of $C$. Besides the Lee distance, Euclidean and Hamming distances are also considered for codes over $\ZZ$, however, in this paper we primarily consider the Lee distance because it is the most relevant metric when we consider binary images of $\ZZ$-codes and compare them with existing binary codes. The Euclidean distance is important for the notion of Type II codes  \cite{type2}. A $\ZZ$-linear code is not necessarily a free module, a module with a basis. It is free if and only if $k_2 = 0$. If $k_2$  is zero, we call $\mathcal{C}$ a free code. Otherwise,  $\mathcal{C}$ is non-free.
The Lee weights of \(0, 1, 2, 3 \in \mathbb{Z}_4\), denoted by \(w_L(0), w_L(1), w_L(2), w_L(3)\), respectively, are defined by
\[
w_L(0) = 0, \quad w_L(1) = w_L(3) = 1, \quad w_L(2) = 2.
\]
\noindent  The Lee weight \(w_L(x)\) of \(x = (x_1, x_2, \ldots, x_n) \in \mathbb{Z}_4^n\) is defined to be the integral sum of the Lee weights of its components:
\[
w_L(x) = \sum_{i=1}^n w_L(x_i).
\]
\noindent This weight function defines a distance function
\[
d_L(x, y) = w_L(x - y)
\]
on \(\mathbb{Z}_4^n\), which is called the Lee distance.
\noindent  The Gray map $\phi: \ZZ^n \rightarrow \mathbb{Z}_2^{2n}$ is the coordinate-wise extension of the bijection $\ZZ \rightarrow \mathbb{Z}_2^2$ defined by
\begin{align*}
    0 \rightarrow 00, \\
    1 \rightarrow 01, \\
    2 \rightarrow 11, \\
    3 \rightarrow 10.
\end{align*}
The image $\phi(C)$ of a linear code $C$ over $\ZZ$ of length $n$ by the Gray map is a binary code of length $2n$. We know that $\phi$ is both  a weight-preserving map from
\begin{align*}
    (\ZZ^n, \text{Lee weight } (w_L)) \text{ to } (\mathbb{Z}_2^{2n}, \text{Hamming weight } (w_H))
\end{align*}
and a distance-preserving map from
\begin{align*}
    (\ZZ^n, \text{Lee distance } (d_L)) \text{ to } (\mathbb{Z}_2^{2n}, \text{Hamming distance } (d_H)).
\end{align*}
We also know that for any linear code $C$ over $\ZZ$, its Gray map image $\phi(C)$ is distance invariant, so its minimum Hamming distance $d_H$ is equal to the minimum Hamming weight $w_H$. Thus, we know that the minimum Hamming distance of $\phi(C)$ is equal to the minimum Lee weight $d_L$ of $C$. Additionally, we know that
\begin{align*}
    w_L = w_H = d_H = d_L.
\end{align*}
Thus, $\phi(C)$ is a binary code but not necessarily a linear code because the Gray map is not linear. Conditions that characterize binary codes that can be obtained as Gray images of $\ZZ$-linear codes are given in \cite{Wan1997} (Corollary 3.15 and Corollary 3.17).  We refer the reader to~\cite{Wan1997} for more details on codes over $\ZZ$.\\
One of the main problems of coding theory is to determine the optimal values of the parameters of a linear code and explicitly construct codes with those parameters. For codes over finite fields, there have been databases of best known linear codes (BKLC). For example, the online database \cite{database} is well known. The Magma software also contains such a database \cite{magmaOnline}. For codes over $\ZZ$, a database was introduced more recently \cite{database2022}.  Looking at the tables, we observe that in most cases optimal codes are not known. Researchers often use computer searches to find codes with better parameters than currently BKLCs. However, exhaustive searches over  all linear codes are not feasible due to two fundamental computational challenges.  First, computing the minimum distance of an arbitrary linear code is NP-hard \cite{NPHard}, and it quickly becomes infeasible for larger dimensions. Second, for a given length, dimension, and  the finite field $GF(q)$, the number of linear codes is too big to conduct exhaustive computer searches, except for small lengths and dimensions. Given these  computational complexity challenges, researchers often focus on promising subclasses of linear codes with rich mathematical structures.

\section{Cyclic Codes and Their Generalizations}
Cyclic codes and their various generalizations play a key role in algebraic coding theory. Their structure makes them easier to implement in practice, and they establish a key link between algebra and coding theory.
\begin{definition}
A code $\mathcal{C}$ of length $ n $ over a finite field $ \mathbb{F}_q $ is called a cyclic code if for every codeword $ v = (v_0, v_1, \ldots, v_{n-1}) \in \mathcal{C}$, the word obtained by cyclically shifting $ v $ by one position to the right, that is, $ \sigma(v) = (v_{n-1}, v_0, v_1, \ldots, v_{n-2}) $, is also in $\mathcal{C}$.
\end{definition}
One of the useful generalizations of cyclic codes are quasi-cyclic (QC) codes. Those linear codes which are invariant under a cyclic shift by some fixed number of positions are quasi-cyclic.
\begin{definition}
A code linear $\mathcal{C}$ of length $ n $ over a finite ring or a field is called a quasi-cyclic code of index $ \ell $ if for every codeword $ v = (v_0, v_1, \ldots, v_{n-1}) \in \mathcal{C}$, the word obtained by cyclically shifting $ v $ by $ \ell $ positions, that is, $ \sigma^{\ell}(v) = (v_{n-\ell}, v_{n-\ell+1}, \ldots, v_{n-1}, v_0, v_1, \ldots, v_{n-\ell-1}) $, is also in $\mathcal{C}$.
\end{definition}
\begin{definition}
A linear code of length \(N\) over a ring $R$ is constacyclic if for some unit \(a \in R\), the code is invariant under the automorphism \(\sigma_a\) defined by
\[
(c_0, c_1, \ldots, c_{N-1}) \mapsto (a c_{N-1}, c_0,c_1, \ldots, c_{N-2}).
\]
\end{definition}
Note that the special case \(a = 1\) gives cyclic codes.  In the case \(a = -1\), we say that the code is \textit{negacyclic}. Constacyclic codes of length \(N\) over \(R\) can be identified as ideals in the quotient ring \(R[x]/\langle x^N - a\rangle\) via the isomorphism from \(R^N\) to \(R[x]/\langle x^N - a\rangle\) defined by
\[
(v_0, v_1, \ldots, v_{N-1}) \mapsto v_0 + v_1 x + \cdots + v_{N-1} x^{N-1}.
\]
The algebraic structure of cyclic and constacyclic codes over a finite field $\Fq$ is quite simple. They are ideals in the quotient ring  \(\Fq[x]/\langle x^N - a\rangle\) which is a principal ideal ring. For every constacyclic code $\C$ of length $N$ over $\Fq$, there is a unique monic polynomial $g(x)$ of least degree in  $\C$ that generates $\C$, i.e., $\C=\langle g(x)\rangle=\{f(x)g(x) \mod x^N-a: f(x)\in \Fq[x] \}$. This polynomial is called the standard generator polynomial of $\C$ and it divides $x^N-a$. Hence, there is a one-to-one correspondence between constacyclic codes of length $N$ over $\Fq$ and divisors of $x^N-a$ over $\Fq$.  When $\gcd(N,q)=1$, the polynomial $x^N-a$ has $N$ distinct roots (in an extension of $\Fq$), otherwise it has repeated roots. In either case, all constacyclic codes of length $N$ over $\Fq$ can be obtained from the divisors of $x^N-a$.

\section{Cyclic Codes of Even Length over $\mathbb{Z}_4$}
\label{sec:sp}
When we consider cyclic codes over $\ZZ$, their algebraic structure is more complicated. In the case $\gcd(N,4)=1$, that is when $N$ is odd, their structure is similar to the case of cyclic codes over $\Fq$. This is the easier case where the factorization of $x^N-1$ into irreducibles over $\ZZ$ is unique and can be obtained from the factorization of $x^N-1$ into irreducibles over $\mathbb{Z}_2$ via the Hensel lift \cite{Wan1997}. Let $N$ be the length of a $\ZZ$-linear code. It can uniquely be written as $N = 2^k \cdot n$ where $n$ is odd. We can classify the codes in $\mathbb{Z}_4$ based on the parity of code length $N$. If $k=0$ then $N = n $ is odd. This is the case considered in \cite{database2022} where the authors found a large number of new $\ZZ$-linear codes from cyclic codes of odd length over $\ZZ$. The structure of  cyclic codes of odd length over $\ZZ$ is given by the following theorem:
\begin{theorem}[\cite{Aydin2011}]
Let $C$ be a cyclic code in $R_n = \mathbb{Z}_4[x]/\langle x^n - 1\rangle$ where $n$ is odd. Then, $R_n$ is a principal ideal ring and
    \[
    C =\langle (g(x), 2a(x))\rangle = \langle g(x) + 2a(x)\rangle
    \]
    where $g(x)$, $a(x)$ are polynomials with $a(x) \mid g(x) \mid (x^n - 1) \mod 4$.
\end{theorem}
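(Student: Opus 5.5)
The plan is to exploit the unique factorization available for odd $n$. Since $n$ is odd, $x^n-1$ is squarefree over $\mathbb{Z}_2$. By Hensel lifting it factors uniquely over $\mathbb{Z}_4$ as $x^n-1=f_1(x)\cdots f_r(x)$, with the $f_i$ monic basic irreducible and pairwise coprime; here coprime means $\langle f_i\rangle+\langle f_j\rangle=R_n$, which lifts from coprimality mod $2$. The Chinese Remainder Theorem then gives
\[
R_n\cong \bigoplus_{i=1}^r \mathbb{Z}_4[x]/\langle f_i(x)\rangle ,
\]
and each summand is a Galois ring $GR(4,\deg f_i)$. A Galois ring is a finite chain ring whose only ideals are $0$, $\langle 2\rangle$ and the whole ring. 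Every ideal of $R_n$ is a direct sum of ideals of the summands, and each of those is principal, so $R_n$ is a principal ideal ring.

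Next comes the explicit form of the generator. Partition the index set $\{1,\dots,r\}$ according to whether $C$ projects to $0$, to $\langle 2\rangle$, or to the full ring in the $i$-th component. Let $g(x)$ be the product of the $f_i$ whose component is $0$ or $\langle 2\rangle$, and let $a(x)$ be the product of the $f_i$ whose component is $0$. Then $a\mid g\mid x^n-1$. One checks componentwise that the ideal $\langle g(x),2a(x)\rangle$ has the same components as $C$:
\begin{itemize}
\item $g$ is a unit modulo the $f_i$ outside its factor set, so those components are full;
\item $g\equiv0$ but $2a$ is $2$ times a unit modulo the $f_i$ dividing $g/a$, so those components are $\langle 2\rangle$;
\item both generators vanish modulo the $f_i$ dividing $a$.
\end{itemize}
Hence $C=\langle g,2a\rangle$.

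To get the single generator, show $\langle g+2a\rangle=\langle g,2a\rangle$, again componentwise via CRT:
\begin{itemize}
\item Modulo $f_i$ outside $g$, the element $g+2a\equiv g+2a$ is a unit plus a multiple of $2$, hence a unit, since an element of a Galois ring is a unit iff it is nonzero mod $2$.
\item Modulo $f_i\mid g/a$, it is $2a$ with $a$ a unit, generating $\langle 2\rangle$.
\item Modulo $f_i\mid a$, it is $0$.
\end{itemize}
The components match, so the ideals coincide.

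The main obstacle is justifying the structural inputs cleanly:
\begin{itemize}
\item uniqueness and coprimality of the Hensel lift, so that CRT applies over $\mathbb{Z}_4$ and not just $\mathbb{Z}_2$;
\item the fact that $\mathbb{Z}_4[x]/\langle f\rangle$ for a basic irreducible $f$ is a local ring with maximal ideal $\langle 2\rangle$, so its ideal lattice is just $0\subset\langle2\rangle\subset$ ring.
\end{itemize}
For the latter, I would argue directly: any element not divisible by $2$ reduces to a nonzero element of the field $\mathbb{F}_2[x]/\langle \bar f\rangle$, hence is invertible mod $2$. Since $u v\equiv 1 \pmod 2$ means $uv=1+2t$, and $(1+2t)^2=1$, $u$ is a unit. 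The remaining steps are bookkeeping.
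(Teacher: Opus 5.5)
Your proof is correct; the only outside input is Hensel's lemma for the squarefree polynomial $x^n-1$ over $\mathbb{Z}_2$. The paper does not prove this statement itself. It quotes it from \cite{Aydin2011}, which relies on the classical description of odd-length cyclic $\mathbb{Z}_4$-codes, so there is no in-paper argument to compare against.

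Your route decomposes $R_n$ by the Chinese Remainder Theorem into Galois rings $GR(4,\deg f_i)$, whose only ideals are $0\subset\langle 2\rangle\subset GR(4,\deg f_i)$. This is the standard structural argument. Its advantage is that the principal-ideal property, the divisibility $a\mid g\mid x^n-1$, and the single generator all come out of one componentwise picture. It also extends verbatim to $\mathbb{Z}_{p^m}$, with chains of ideals $\langle p^j\rangle$.

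Two small remarks. First, uniqueness of the Hensel factorization is never used; you need only existence and pairwise comaximality of the lifts in $\mathbb{Z}_4[x]$. Second, the equality $\langle g,2a\rangle=\langle g+2a\rangle$ can also be checked without CRT. Write $x^n-1=abc$ with $g=ab$. In $R_n$ one has $2(g+2a)=2ab$ and $c(g+2a)=2ac$. Choosing $s,t\in\mathbb{Z}_4[x]$ with $sb+tc=1$ gives $2a=s\cdot 2ab+t\cdot 2ac\in\langle g+2a\rangle$, and then $g=(g+2a)-2a$ lies in it as well.
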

\noindent If $k \ge 1$ ($N = 2^k \cdot n$ as above), then $N$ is even. We consider two subcases:
    \begin{itemize}
        \item If $k = 1$, then $N$ is said to be oddly even, that is, $N$ is a multiple of 2 and not 4.
        \item If $k > 1$, then $N$ is a multiple of 4.
    \end{itemize}
The first theorem for which we implemented a search algorithm only handles the oddly even case:

\subsection{Cyclic codes of oddly even length}
The case of cyclic codes of oddly even length is first studied in \cite{Blackford2003a}.
\begin{theorem}[\cite{Blackford2003a}]
Let  $N=2n$ where $n$ is odd, and $\mathcal{C}$ be an ideal in $\mathbb{Z}_4[X]/\langle X^{N} - 1\rangle$. Then $\mathcal{C}$ is generated by an element of the form
\[
\left( a_1(X^2) a_2(X^2) a_3(X^2) \tilde{b}(X) c(X),\ 2a_1(X^2) a_2(X) b(X) \right)
\]
where $X^n - 1 = a_1(x) a_2(x) a_3(x) b(x) c(x) d(x)$,  $a_1(x), a_2(x), a_3(x), b(x), c(x)$ and $d(x)$ are monic and pairwise relatively prime polynomials in $\mathbb{Z}_4[X]$, and $\tilde{b}(X)$ is a monic polynomial such that $\tilde{b}(x) \equiv b(x) \pmod{2\mathbb{Z}_4[X]}$.
\end{theorem}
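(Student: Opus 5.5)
The plan is to reduce the structure of $R_{2n}=\mathbb{Z}_4[X]/\langle X^{2n}-1\rangle$ to the odd-length case via a Chinese Remainder decomposition, and then to classify ideals in each local component.

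\textbf{Step 1: decompose the ring.} Since $n$ is odd, $X^n-1$ factors over $\mathbb{Z}_4$ into pairwise coprime monic basic irreducibles $f_1\cdots f_r$. These are the Hensel lifts of the binary factors. Then
\[
X^{2n}-1=(X^n-1)(X^n+1)\equiv (X^n-1)^2 \pmod 2 .
\]
I would use the identity $f_i(X)f_i(-X)=\pm f_i^{(2)}(X^2)$, where $f_i^{(2)}$ is the Graeffe transform. For a Hensel lift $f_i$ it equals $\pm f_i(X^2)$, because the roots of $f_i$ are Teichm\"uller-type and squaring permutes them. This gives
\[
X^{2n}-1=\prod_i f_i(X^2).
\]
The factors $f_i(X^2)$ are pairwise coprime, so CRT yields
\[
R_{2n}\cong\prod_i \mathbb{Z}_4[X]/\langle f_i(X^2)\rangle .
\]

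\textbf{Step 2: identify each component.} Each component is $S_i=GR(4,m_i)[u]/\langle u^2-\xi,\ \ldots\rangle$. Concretely, $f_i(X^2)\equiv f_i(X)^2 \pmod 2$, so $S_i$ is a local ring with maximal ideal $\langle 2,\ f_i(X)\rangle$. The next task is to classify all ideals of $S_i$. I would show that the ideal lattice of $S_i$ is small. The candidates are
\[
0,\quad \langle 2f_i\rangle,\quad \langle 2\rangle,\quad \langle f_i(X)+2t\rangle,\quad \langle 2,f_i\rangle,\quad S_i .
\]
Here one must account for the non-principal ideal $\langle 2,f_i\rangle$ and for the ideal generated by $\tilde b$-type lifts, i.e.\ elements congruent to $f_i$ mod $2$ that generate an ideal distinct from $\langle 2,f_i\rangle$. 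A useful check is that $(f_i(X)+2t)^2\in 2S_i$ up to units.

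\textbf{Step 3: glue the components and match the notation.} Gluing the local ideal choices across all $i$ and grouping the indices by which local ideal occurs defines the partition
\[
X^n-1=a_1a_2a_3\,b\,c\,d .
\]
The assignment I expect is:
\begin{itemize}
\item $d$: the component is the whole ring $S_i$;
\item $c$: the ideal $\langle 2,f_i\rangle$, contributing $c(X)$ in the first coordinate with the $2$ absorbed;
\item $b$: the ideal $\langle f_i+2t\rangle$, giving the factor $\tilde b(X)$;
\item $a_2$: the ideal $\langle 2\rangle$, giving $a_2(X^2)$ against $2a_2(X)$;
\item $a_1$: zero, giving $f_i(X^2)$ in both coordinates;
\item $a_3$: the ideal $\langle 2f_i\rangle$.
\end{itemize}
Finally, I would write down the idempotent or CRT lift of the local generators. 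I would then verify that the pair
\[
\bigl(a_1(X^2)a_2(X^2)a_3(X^2)\tilde b(X)c(X),\ 2a_1(X^2)a_2(X)b(X)\bigr)
\]
localizes to the correct generator in each $S_i$, using that every factor not attached to $f_i$ is a unit in $S_i$.

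\textbf{Main obstacle.} The delicate part is Step 2, the complete ideal classification of $S_i$. This ring is not a chain ring, because $\langle 2\rangle$ and $\langle f_i+2t\rangle$ are incomparable. The issue is to show that every ideal is one of the listed forms, with the ambiguity in $t$ captured exactly by the choice of lift $\tilde b$. I would try a filtration argument. For an ideal $I$, consider $I\cap 2S_i$ and the image $\bar I$ of $I$ in $S_i/2S_i=\mathbb{F}_{2}[X]/\langle f_i^2\rangle$, which is a chain ring. The cases are then $\bar I=0$, $\bar I=\langle \bar f_i\rangle$, or $\bar I$ the whole ring. In the middle case I would analyze which $2$-torsion is forced and derive the list.
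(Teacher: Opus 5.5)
The paper gives no proof of this statement; it is quoted from Blackford. Your plan (CRT splitting into local rings $S_i\cong GR(4,m_i)[u]/\langle u^2-1\rangle$, classifying their ideals, reassembling) is the standard route to this result; Blackford phrases the splitting as a discrete Fourier transform.

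Steps 1 and 2 are sound. The factors $f_i(X^2)$ are pairwise coprime because $f_i(X^2)\equiv f_i(X)^2 \pmod 2$. Your six ideals are exactly the ideals of $S_i$, which gives $\prod_i(2^{m_i}+5)$ codes and matches Table 1 for $N=2,6,10$. Your filtration does close the classification: if $\bar I=\langle \bar f_i\rangle$, pick $y=f_i+2t\in I$. Then $2f_i=2y\in I$, so $I\cap 2S_i$ is $\langle 2f_i\rangle$ or $\langle 2\rangle$, giving $I=\langle f_i+2t\rangle$ or $I=\langle 2,f_i\rangle$.

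Incidentally, $X^{2n}-1=\prod_i f_i(X^2)$ is just the substitution $X\mapsto X^2$, so the Graeffe identity is not needed in Step 1. It is better spent in Step 2. There $f_i(X)f_i(-X)=\pm f_i(X^2)=0$, and the odd part of $\bar f_i$ is $X\bar f_i'$, a unit. Together these give $f_i^2=2f_i\cdot(\text{unit})$ in $S_i$, which is what makes $\langle f_i+2t\rangle\neq\langle 2,f_i\rangle$.

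The genuine error is in the dictionary of Step 3: you have interchanged $a_2$ and $a_3$. At a factor $f_i$ of $a_2$, the first generator contains $a_2(X^2)\mapsto f_i(X^2)=0$. The second generator $2a_1(X^2)a_2(X)b(X)$ becomes $2f_i$ times a unit, so the local ideal is $\langle 2f_i\rangle$, not $\langle 2\rangle$. At a factor of $a_3$, the first generator again vanishes, but $a_3$ does not occur in the second generator, which is therefore $2$ times a unit. So the local ideal is $\langle 2\rangle$, not $\langle 2f_i\rangle$. With your assignment, the localization check you plan at the end fails for both types. The correct correspondence is $a_1\mapsto 0$, $a_2\mapsto\langle 2f_i\rangle$, $a_3\mapsto\langle 2\rangle$, $b\mapsto\langle f_i+2t_i\rangle$, $c\mapsto\langle 2,f_i\rangle$, $d\mapsto S_i$.

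You should also make explicit how the local parameters $t_i$, one for each factor $f_i$ of $b$, are realized by a single monic lift $\tilde b=b+2s$ with $\deg s<\deg b$. Locally $\tilde b=\beta_i\,(f_i+2s\beta_i^{-1})$, where $\beta_i=b/f_i$ is a unit in $S_i$. So you need $\bar s\equiv \bar t_i\bar\beta_i \pmod{\bar f_i}$ for every $f_i\mid b$. This is solvable by CRT over $\mathbb{F}_2$ because $\bar b$ is squarefree, and the local ideal depends on $t_i$ only modulo $\langle 2,f_i\rangle$.
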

The main challenge when handling codes with an even length $N$ over $\mathbb{Z}_4$ is that  $x^N - 1$ does not have a unique factorization into irreducibles and Hensel lifting does not work. Computer algebra systems such as Magma do not offer any implementation of factorization in this case.  Consequently, we must manually check every possible polynomial with a degree $\leq N$. The possibilities grows exponentially, and exhaustive searches  becomes intractable  for larger values of $N$ due to both CPU time and the memory need.

This theorem is useful as it allows us to structure our search for the cyclic codes of length $N=2\cdot n$ where  $n$ is odd. We can find the irreducible factors of $x^n - 1$ in $\mathbb{Z}_2$ and find their unique Hensel lifts  in $\mathbb{Z}_4$. The factors are then distributed randomly between $a_1(x)$, $a_2(x)$, $a_3(x)$, $b(x)$, $c(x)$ and $d(x)$. To illustrate the process, we use a specific example. Let $N = 6$ and $n = 3$. The cyclotomic factors (another term for irreducible factors) of $x^n - 1$ over $\mathbb{Z}_4$ are given by $x + 3$ and $x^2 + x + 1$. We assign $x + 3$ to $b(x)$ and $x^2 + x + 1$ to $d(x)$. The ideal $\mathcal{C}$ in $\mathbb{Z}_4[X] / \langle X^6 - 1\rangle$ is then formed by structuring the remaining factors $a_1(x)$, $a_2(x)$, $a_3(x)$, and $c(x)$ to the identity element of $\mathbb{Z}_4$ as there is not enough distinct factors. In other words, we set $a_1(x) = a_2(x) = a_3(x) = c(x) = 1$.

\noindent Given:
\begin{align*}
a_1(x) &= 1 \\
a_2(x) &= 1 \\
a_3(x) &= 1 \\
b(x) &= x + 3 \\
c(x) &= 1 \\
d(x) &= x^2 + x + 1
\end{align*}
\noindent The ideal $\mathcal{C}$ is of the form
\[
\mathcal{C} = \langle a_1(X^2) a_2(X^2) a_3(X^2) \tilde{b}(X) c(X),\ 2a_1(X^2) a_2(X) b(X) \rangle.
\]
\noindent Substituting the values
\[
\mathcal{C} = \langle 1 \cdot 1 \cdot 1 \cdot \tilde{b}(X) \cdot 1,\ 2 \cdot 1 \cdot (x + 3) \rangle = \langle \tilde{b}(X),\ 2(x + 3) \rangle.
\]
\noindent Here, $\tilde{b}(X)$ is a monic polynomial such that $\tilde{b}(x) \equiv b(x) \pmod{2\mathbb{Z}_4[X]}$. Since $b(x) = x + 3$, we have $\tilde{b}(x) = x + 1$ or $\tilde{b}(x) = x+3$ which we will choose for the sake of simplicity.

\noindent So, the ideal $\mathcal{C}$ becomes
\[
\mathcal{C} = \langle x + 3,\ 2(x + 3) \rangle.
\]
\noindent Finally, the polynomials that generate an ideal $\mathcal{C}$ for $N = 6$ and $n = 3$ are $x + 3$ and $2x + 2$. Therefore, the generating polynomials are:
\begin{align*}
f_1(X) &= x + 3 \\
f_2(X) &= 2(x + 3) = 2x + 6 \equiv 2x + 2 \pmod{4}
\end{align*}
\noindent The resulting cyclic code is a code with $k_1=5$ and $k_2=0$ and minimum Lee distance of 2. The parameters of this code, and all of the codes obtained in this work have been computed using the Magma software.

This structured approach helps in reducing the computational complexity and aids in identifying the valid cyclic codes over $\mathbb{Z}_4$. However, as $N$ grows larger and as shown in the table below, the number of ideals grows exponentially. Thus, exhaustive searches are only possible for smaller lengths. We set the cap for exhaustive searches to length 48. From length 48 and up until length 98, we gathered random subsets of 10,000 codes for each oddly even length.

With the oddly even numbers out of the way, we shifted our attention to the case where $N=2^k \cdot n$ such that $k > 1$. Since we already found the codes such that $k = 1$, we restricted our search for the general case to $k > 1$.  The number of codes per length was found using a formula in \cite{Blackford2003a} while the the number of free codes and codes with a single generator polynomial were found through computational exhaustive searches which became infeasible after a certain point. This is why many entries for even lengths are blank in Table 1.

\begin{table}[ht]
\caption{The numbers of cyclic codes, free cyclic codes, and cyclic codes with one generator polynomial over $\mathbb{Z}_4$ for $1\leq N\leq 100$}
\centering
\begin{tabular}{|l|c|c|r|}
\hline
$N$ & $\#$Cyclic & $\#$Free & $\#$1-Gen \\
\hline
1  & 3         & 2      & 3     \\ \hline
2  & 7         & 4      & 6     \\ \hline
3  & 9         & 4      & 9     \\ \hline
4  & 23        & 10     & 16    \\ \hline
5  & 9         & 4      & 9     \\ \hline
6  & 63        & 24     & 48    \\ \hline
7  & 27        & 8      & 27    \\ \hline
8  & 135       & 46     & 76    \\ \hline
9  & 27        & 8      & 27    \\ \hline
10 & 147       & 72     & 120   \\ \hline
11 & 9         & 4      & 9     \\ \hline
12 & 1035      & 260    & 576   \\ \hline
13 & 9         & 4      & 9     \\ \hline
14 & 1183      & 400    & 857   \\ \hline
15 & 243       & 32     & 243   \\ \hline
16 & 2519      &        &       \\ \hline
17 & 27        & 8      & 27    \\ \hline
18 & 4347      &        &       \\ \hline
19 & 9         & 4      & 9     \\ \hline
20 & 7935      &        &       \\ \hline
21 & 729       & 64     & 729   \\ \hline
22 & 7203      &        &       \\ \hline
23 & 27        & 8      & 27    \\ \hline
24 & 106515    &        &       \\ \hline
25 & 27        & 8      & 27    \\ \hline
26 & 28707     &        &       \\ \hline
27 & 81        & 16     & 81    \\ \hline
28 & 293687    &        &       \\ \hline
29 & 29        & 4      & 29    \\ \hline
30 & 583443    &        &       \\ \hline
31 & 2187      & 128    & 2187  \\ \hline
32 & 655287    &        &       \\ \hline
33 & 243       & 32     & 243   \\ \hline
34 & 476847    &        &       \\ \hline
35 & 729       & 64     & 729   \\ \hline
36 & 4579875   &        &       \\ \hline
37 & 9         & 4      & 9     \\ \hline
38 & 1835043   &        &       \\ \hline
39 & 243       & 32     & 243   \\ \hline
40 & 11953575  &        &       \\ \hline
41 & 27        & 8      & 27    \\ \hline
42 & 50690367  &        &       \\ \hline
43 & 81        & 16     & 81    \\ \hline
44 & 24235215  &        &       \\ \hline
45 & 6561      & 256    & 6561  \\ \hline
46 & 29503663  &        &       \\ \hline
47 & 27        & 8      & 27    \\ \hline
48 & 513566163 &        &       \\ \hline
49 & 243       & 32     & 243   \\ \hline
50 & 154141407 &        &       \\ \hline
\end{tabular}
\hspace{1cm}
\begin{tabular}{|l|c|c|r|}
\hline
$n$ & $\#$Cyclic & $\#$Free & $\#$1-Gen \\
\hline
51 & 6561      & 256    & 6561  \\ \hline
52 & 386347215 &        &       \\ \hline
53 & 9         & 4      & 9     \\ \hline
54 & 1139561703&        &       \\ \hline
55 & 243       & 32     & 243   \\ \hline
56 & 7298992215&        &       \\ \hline
57 & 243       & 32     & 243   \\ \hline
58 & 1879048227&        &       \\ \hline
59 & 9         & 4      & 9     \\ \hline
60 & 42500851875&       &       \\ \hline
61 & 9         & 4      & 9     \\ \hline
62 & 17960084863&       &       \\ \hline
63 & 1594323   & 8192   & 1594323 \\ \hline
64 & 42949672823  &       &       \\ \hline
65 & 2187 &   128    &   2187    \\ \hline
66 & 68641485507&       &       \\ \hline
67 & 9 &   4    &   9    \\ \hline
68 & 1.03E+11 &       &       \\ \hline
69 & 729 &   64    &  729     \\ \hline
70 & 4.18E+11&       &       \\ \hline
71 & 27 &   8    &   27    \\ \hline
72 & 1.93E+12&       &       \\ \hline
73 & 19683&   512    &  19683     \\ \hline
74 & 4.81E+113&       &       \\ \hline
75 & 6561&   256    &  6561     \\ \hline
76 & 1.58E+12&     &      \\ \hline
77 & 729 &   64    &  729   \\ \hline
78 & 4.35E+12&    &      \\ \hline
79 & 27 &   8    &  27    \\ \hline
80 & 1.46E+13 &     &      \\ \hline
81 & 243 &  32   &  243    \\ \hline
82 & 7.70E+12 &     &      \\ \hline
83 & 9 &   4  &   9   \\ \hline
84 & 2.59E+14 &     &     \\ \hline
85 & 531441 &  4096  &  531441 \\ \hline
86 & 3.08E+13 &  &   \\ \hline
87 & 243 & 32 & 243  \\ \hline
88 & 1.49E+14 &  &   \\ \hline
89 & 19683 & 512 & 19683 \\ \hline
90 & 6.77E+14 &  &   \\ \hline
91 & 59049 & 1024 & 59049  \\ \hline
92 & 4.07E+14 &  &   \\ \hline
93 & 4782969 & 16384 & 4782969  \\ \hline
94 & 4.93E+14 &  &   \\ \hline
95 & 243 & 32 &  243 \\ \hline
96 & 8.76E+15 &  &  \\ \hline
97 & 27 & 8 &  27 \\ \hline
98 & 5.20E+15 &  &   \\ \hline
99 & 6561 & 256 &  6561 \\ \hline
100 & 8.72E+15 &  &  \\ \hline
\end{tabular}
\end{table}

\clearpage
\subsection{General case of cyclic codes of even length}
The form of a generator of a cyclic code of even length in the general  case is given in  \cite{Abualrub2007}.
\begin{theorem}[\cite{Abualrub2007}]
\begin{enumerate}
    \item Let $\C$ be a cyclic code of even length $N$ over $\ZZ$. Then either
    \begin{enumerate}
        \item $\C$ is a free module with a single generator
        \[
        \C = \langle g(x) + 2p(x)\rangle,
        \]
        where $g(x) \mid (x^n - 1) \mod 2$ and $(g(x) + 2p(x)) \mid (x^N - 1) \mod 4$, or,
        \item $\C = \langle (g(x) + 2p(x), 2a(x))\rangle$ where $g(x)$, $a(x)$, and $p(x)$ are polynomials with $g(x) \mid (x^N - 1) \mod 2$, $a(x) \mid g(x) \mod 2$, $a(x) \mid p(x) \left(\frac{x^N - 1}{g(x)}\right) \mod 2$, and $\deg( a(x))>\deg(p(x))$.
    \end{enumerate}
\end{enumerate}
\end{theorem}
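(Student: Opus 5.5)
The plan is the standard reduction-mod-2 argument. Let $\mu:\ZZ[x]\to\mathbb{Z}_2[x]$ be reduction modulo $2$. Fix a cyclic code $\C$ of even length $N$, viewed as an ideal of $R_N=\ZZ[x]/\langle x^N-1\rangle$. From $\C$ we form two binary cyclic codes: the residue code $\mathrm{Res}(\C)=\mu(\C)$ and the torsion code $\mathrm{Tor}(\C)=\{v\in\mathbb{Z}_2^N : 2v\in\C\}$. Both are ideals of $\mathbb{Z}_2[x]/\langle x^N-1\rangle$, and $\mathrm{Res}(\C)\subseteq\mathrm{Tor}(\C)$. By the field case recalled in Section~3, which holds even though $x^N-1$ has repeated roots over $\mathbb{Z}_2$, each is principal with a unique monic generator dividing $x^N-1$. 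Write $\mathrm{Res}(\C)=\langle g(x)\rangle$ and $\mathrm{Tor}(\C)=\langle a(x)\rangle$. The inclusion then gives $a(x)\mid g(x)$ modulo $2$, and $|\C|=2^{N-\deg g}\cdot 2^{N-\deg a}$.

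Next I build the generators. Choose $c\in\C$ with $\mu(c)=g$; such a $c$ exists because $g$ lies in the residue code. Write $c=g+2p$ with $g$ taken monic over $\ZZ$. I then reduce $p$ modulo $a$, using the fact that $2a\in\C$ (since $a$ is in the torsion code). Subtracting suitable multiples $2q(x)a(x)$ lets us assume $\deg p<\deg a$, which is the degree condition in (b).

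To show $\C=\langle g+2p,\,2a\rangle$, take any $f\in\C$. Then $\mu(f)=h\,g$ for some $h$, so $f-h(g+2p)\in\C\cap 2R_N$. Every element of $\C\cap 2R_N$ has the form $2v$ with $v\in\mathrm{Tor}(\C)=\langle a\rangle$, so it lies in $\langle 2a\rangle$. This proves the generation claim.

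The divisibility condition comes from multiplying the first generator by $(x^N-1)/g$:
\[
\frac{x^N-1}{g}\,(g+2p)=(x^N-1)+2p\,\frac{x^N-1}{g}\equiv 2p\,\frac{x^N-1}{g}.
\]
This product lies in $\C\cap 2R_N$, so $p\,(x^N-1)/g$ lies in $\mathrm{Tor}(\C)$, which gives $a\mid p\,(x^N-1)/g$ modulo $2$. The one subtlety is that $g$ divides $x^N-1$ only over $\mathbb{Z}_2$. Over $\ZZ$ we have $x^N-1=g\,h+2r$, so I work with $h$ instead and absorb the extra $2r$ term into the torsion computation. This step needs care, and I expect it to be the technical core.

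For the dichotomy, first suppose $\mathrm{Res}(\C)=\mathrm{Tor}(\C)$, i.e.\ $a=g$. Then $2a=2g$ is congruent to $2(g+2p)$, so the code is generated by $g+2p$ alone, and its size $4^{N-\deg g}$ shows it is free. The main obstacle is the divisibility claim in (a), namely $(g+2p)\mid x^N-1$ over $\ZZ$ when $N$ is even and factorization is not unique. I would attempt a division algorithm by the monic polynomial $g+2p$. Write $x^N-1=(g+2p)\,q+r$ with $\deg r<\deg g$. Then $r\in\C$, and reducing modulo $2$ forces $\mu(r)\in\langle g\rangle$ with degree below $\deg g$, so $\mu(r)=0$. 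Hence $r=2r'$ with $r'\in\mathrm{Tor}(\C)=\langle g\rangle$ and $\deg r'<\deg g$, which forces $r'=0$. So $r=0$ and the divisibility holds. Otherwise $a\ne g$, the torsion code is strictly larger, and we are in case (b).
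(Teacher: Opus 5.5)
The gap is the step you deferred: the condition $a\mid p\,(x^N-1)/g \pmod 2$ in case (b). As you set it up, with an arbitrary monic lift of $g$, it cannot be closed. The rest is sound: the residue/torsion set-up, the generation $\C=\langle g+2p,\,2a\rangle$, the reduction to $\deg p<\deg a$, freeness when $a=g$, and the division-algorithm proof that $g+2p$ divides $x^N-1$ over $\ZZ$ in case (a).

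Write $x^N-1=gh+2r$. Your computation then gives $h(g+2p)=2(ph-r)$ in $R_N$, hence only $a\mid ph-r \pmod 2$. The $r$-term is not absorbed by the torsion code. For an arbitrary monic lift $g$, the torsion generator $a$ need not divide $r$ modulo $2$, and then the stated condition is false. For example:
\begin{itemize}
\item Take $N=6$, $g=x^4+x^3+x+1$ (the $\{0,1\}$-lift of $(x+1)^2(x^2+x+1)$) and $h=x^2+x+1$.
\item Then $r=-(x^5+x^4+\cdots+1)\equiv (x+1)(x^2+x+1)^2 \pmod 2$.
\item The code $\C=\langle g+2(x+1),\,2(x^2+1)\rangle$ satisfies $h\,(g+2(x+1))=2xg$ in $R_6$. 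So its residue code is $\langle g\rangle$ and its torsion code is $\langle x^2+1\rangle$.
\item Every $p$ with $g+2p\in\C$ satisfies $p\equiv x+1 \pmod{x^2+1}$ over $\mathbb{Z}_2$, yet $(x+1)^2$ does not divide $(x+1)(x^2+x+1)$.
\end{itemize}

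The missing idea is that the lift of $g$ must be \emph{chosen}. Take $g$ to be a monic divisor of $x^N-1$ in $\ZZ[x]$ that reduces to the residue generator, and only then write $c=g+2p$. Now $r=0$, your displayed identity holds verbatim, and $ph\in\mathrm{Tor}(\C)$. Reducing $p$ modulo $a$ afterwards preserves $a\mid ph$.

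Such a lift always exists. Write $N=2^kM$ with $M$ odd, and $x^M-1=\prod_i f_i$ with the $f_i$ the Hensel lifts of the irreducible factors $\bar f_i$ over $\mathbb{Z}_2$. Then $x^N-1=(x^M-1)\prod_{j=0}^{k-1}(x^{2^jM}+1)$ and $x^{2^jM}+1=-\prod_i f_i(-x^{2^j})$. For a suitable sign, $\pm f_i(-x^{2^j})$ is monic and reduces to $\bar f_i^{\,2^j}$. The exponents available for each $i$ are therefore $1,1,2,\dots,2^{k-1}$, whose subset sums cover $0,\dots,2^k$. Hence every binary divisor of $x^N-1$ is the reduction of a sub-product of these factors.

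In the example, this lift is $(x^2-1)(x^2+x+1)=x^4+x^3+3x+3$, and then $p=0$. Without fixing the lift this way, the condition in (b) is either false (for a canonical lift such as the $\{0,1\}$ one) or met only trivially, by absorbing $2p$ into $g$.
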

To generate all free cyclic codes for a given \( N \) over \(\mathbb{Z}_4\), we need to exhaustively combine the divisors \( g(x) \) of \( x^N - 1 \) over \(\mathbb{Z}_2\). For each \( g(x) \), we then explore the polynomial vector space to find a polynomial \( p(x) \) with coefficients in \(\mathbb{Z}_4\) and a maximum degree of \( N \). The goal is to find \( p(x) \) such that \( g(x) + 2p(x) \mid (x^N - 1) \mod 4 \).
For example, consider \( N = 6 \) and \( g(x) = x^2 + x + 1 \), which is a divisor of \( x^6 - 1 \) in \(\mathbb{Z}_2\). Through exhaustive search, we find that a possible \( p(x) \) is \( x \). Thus, \( g(x) + 2p(x) = x^2 + x + 1 + 2x = x^2 + 3x + 1 \). We verify that \( x^2 + 3x + 1 \) divides \( x^6 - 1 \mod 4 \). In fact,
\[
x^6 - 1 \equiv (x^2 + 3x + 1) \cdot Q(x) \mod 4 \quad \text{and} \quad Q(x) \equiv x^4 + x^3 + 3x + 3.
\]
Next, we create a circulant matrix of length 6 from the polynomial \( x^2 + 3x + 1 \):
\[
\begin{pmatrix}
1 & 3 & 1 & 0 & 0 & 0 \\
0 & 1 & 3 & 1 & 0 & 0 \\
0 & 0 & 1 & 3 & 1 & 0 \\
0 & 0 & 0 & 1 & 3 & 1 \\
1 & 0 & 0 & 0 & 1 & 3 \\
3 & 1 & 0 & 0 & 0 & 1 \\
\end{pmatrix}
\]
\noindent The cyclic code generated by this matrix has parameters \( k_1 = 4 \), \( k_2 = 0 \), and a minimum Lee weight of 2.

\subsection{Negacyclic codes}
Considering constacyclic codes over $\ZZ$ that are not cyclic codes, there is only one unit in $\ZZ$ different from 1 which is $a=3=-1$. Such codes are called negacyclic. Moreover, it is well known that  negacyclic codes of odd length are equivalent to cyclic codes of odd length over $\ZZ$ via the isomorphism $f(x)\to f(-x)$ \cite{aydin2002}. Therefore, it suffices to consider negacyclic codes of even length over $\ZZ$. Their structure is given in the following theorem.
\begin{theorem}[\cite{Blackford2003b}]
Let \(\C\) be a negacyclic code over \(\mathbb{Z}_4\) of length \(N = 2^a n\), \(n\) odd. If \(\C\) is considered as an ideal in \(\mathbb{Z}_4[x]/\langle x^N + 1\rangle\), then \(\C = \langle g(x) \rangle\), where
\[
g(x) = \prod_{i=0}^{2^a+1} \left[ g_i(x) \right]^i
\]
and \(\{g_i(x)\}\) are monic co-prime divisors of \(x^n - 1\) in \(\mathbb{Z}_4[x]\).
\end{theorem}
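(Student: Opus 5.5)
The plan is to decompose $R=\mathbb{Z}_4[x]/\langle x^N+1\rangle$ by the Chinese Remainder Theorem into local components, show each component is a finite chain ring whose ideals are the powers of one polynomial, and then reassemble the generators. Write $y=x^{2^a}$, so that $x^N+1=y^n+1$. Since $n$ is odd, $y^n+1=-\bigl((-y)^n-1\bigr)$. Let $x^n-1=f_1(x)\cdots f_r(x)$ be the factorization into monic basic irreducibles over $\mathbb{Z}_4$, i.e.\ the Hensel lifts of the factorization over $\mathbb{Z}_2$, which is unique because $n$ is odd. Then
\[
x^N+1=-\prod_{j=1}^r f_j\bigl(-x^{2^a}\bigr).
\]
Reducing mod 2 gives $f_j(-x^{2^a})\equiv \bar f_j(x)^{2^a}$. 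Since $x^n-1$ is separable over $\mathbb{Z}_2$, the factors $F_j(x):=f_j(-x^{2^a})$ are pairwise coprime mod 2. By the standard lifting argument they are therefore pairwise comaximal in $\mathbb{Z}_4[x]$. The Chinese Remainder Theorem then gives
\[
R\cong \prod_j R_j,\qquad R_j=\mathbb{Z}_4[x]/\langle F_j(x)\rangle,
\]
and every ideal of $R$ is a product of ideals of the $R_j$.

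Next I analyse a single $R_j$. Its residue ring is $R_j/2R_j=\mathbb{Z}_2[x]/\langle \bar f_j^{2^a}\rangle$, which is local with maximal ideal generated by $\bar f_j$. Hence $R_j$ is local with maximal ideal $\mathfrak m_j=\langle 2, f_j(x)\rangle$. The key claim is that $\mathfrak m_j$ is principal, generated by $f_j(x)$ alone. Concretely, I want to show
\[
f_j(x)^{2^a}=2\,u_j(x)\quad\text{in } R_j,\ \text{with } u_j \text{ a unit.}
\]
Granting this, $\mathfrak m_j=\langle f_j\rangle$ and $f_j^{2^{a+1}}=4u_j^2=0$. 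Moreover $f_j^{2^{a+1}-1}\neq 0$, which follows from a size count: $|R_j|=4^{2^a\deg f_j}$ and each quotient $\mathfrak m_j^i/\mathfrak m_j^{i+1}$ has size $2^{\deg f_j}$. So $R_j$ is a chain ring whose ideals are exactly $\langle f_j^{i}\rangle$ for $0\le i\le 2^{a+1}$.

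The main obstacle is proving the displayed identity. I would first compute $f_j(x)^{2^a}$ modulo $4$ by repeated squaring. The identity $h(x)^2=h(x^2)+2\,\sigma(h)(x)$ (where $\sigma(h)$ collects the cross terms), iterated $a$ times, gives
\[
f_j(x)^{2^a}\equiv f_j(x^{2^a})+2E_j(x).
\]
Separately, $f_j(x^{2^a})-f_j(-x^{2^a})=2\,O_j(x^{2^a})$, where $O_j$ is the odd part of $f_j$. Since $F_j=0$ in $R_j$, this yields $f_j^{2^a}=2\,(E_j+O_j(x^{2^a}))$ in $R_j$. It then remains to check that $E_j+O_j(x^{2^a})$ is nonzero modulo $\langle 2,f_j\rangle$, i.e.\ a unit. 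I would verify this mod 2 and mod $\bar f_j$ using $\gcd(\bar f_j,\bar f_j')=1$; the separability of $x^n-1$ should be exactly what makes this residue a unit. If the direct computation is messy, the fallback is to prove that $R_j$ has only $2^{a+1}+1$ ideals via the Galois-ring structure of $R_j/\langle f_j\rangle$.

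Finally, by the Chinese Remainder Theorem every ideal of $R$ has the form $\prod_j\langle f_j^{i_j}\rangle$ with $0\le i_j\le 2^{a+1}$. This ideal is generated by the single element $\prod_j f_j(x)^{i_j}$, because the factors for different $j$ are comaximal. Grouping the factors by exponent, set $g_i=\prod_{i_j=i}f_j$. These $g_i$ are monic, pairwise coprime divisors of $x^n-1$, and $\C=\langle\prod_{i=0}^{2^{a+1}}g_i^{\,i}\rangle$, which is the form stated in the theorem.
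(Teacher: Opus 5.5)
The paper gives no proof of this theorem; it only cites \cite{Blackford2003b}. So your proposal has to stand on its own. Most of its framework is correct: the factorization $x^N+1=-\prod_j f_j(-x^{2^a})$, the comaximality of the $F_j$, the locality of $R_j$, the length count once $\mathfrak m_j$ is principal, and the reassembly of $\prod_j\langle f_j^{i_j}\rangle$ into $\langle\prod_i g_i^{\,i}\rangle$.

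\textbf{The gap.} The substantive content of the theorem is that each $R_j$ is a chain ring, i.e.\ $f_j^{2^a}=2u_j$ with $u_j$ a unit. This is exactly the step you leave unproved. The reason you offer for it, separability of $\bar f_j$, is not enough on its own. Whether $w_j:=E_j+O_j(x^{2^a})$ lies outside $\langle 2,f_j\rangle$ depends on the $\mathbb{Z}_4$-lift, because $\bar E_j$ does. Take $n=1$, $a=1$:
\begin{itemize}
\item The lift $x+1$ of $\bar f=x+1$ gives $w=x^2+x\equiv 0$ modulo $\langle 2,x+1\rangle$. The corresponding ring $\mathbb{Z}_4[x]/\langle x^2-1\rangle$ is indeed not a chain ring.
\item The Hensel lift $x+3$ gives $w=x^2+x+1\equiv 1$ modulo $\langle 2,x+1\rangle$, and $\mathbb{Z}_4[x]/\langle x^2+1\rangle$ is a chain ring.
\end{itemize}
So a check carried out only mod $2$ with $\gcd(\bar f_j,\bar f_j')=1$ cannot decide the question.

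\textbf{How to close it.} The missing ingredient is that $f_j$ divides $x^n-1$ over $\mathbb{Z}_4$, not merely mod $2$. Let $m=\deg f_j$ and let $\xi$ be the image of $x$ in $GR(4,m)=\mathbb{Z}_4[x]/\langle f_j\rangle$.
\begin{enumerate}
\item Evaluate the exact polynomial identity $2w_j(x)=f_j(x)^{2^a}-f_j(-x^{2^a})$ at $\xi$. This gives $2w_j(\xi)=-f_j(-\zeta)$, where $\zeta=\xi^{2^a}$.
\item $\zeta$ is again a root of $f_j$. Write $x^n-1=f_jh_j$. Then $\zeta^n=1$, and $h_j(\zeta)$ is a unit because $\bar\zeta=\bar\xi^{2^a}$ is a root of $\bar f_j$, which is coprime to $\bar h_j$.
\item Since $-\zeta=\zeta+2\zeta$ and $(2\zeta)^2=0$, Taylor expansion gives
\[
f_j(-\zeta)=f_j(\zeta)+2\zeta f_j'(\zeta)=2\zeta f_j'(\zeta).
\]
Here $f_j'(\zeta)$ is a unit by separability; this is where separability genuinely enters.
\item Hence $\bar w_j(\bar\xi)=\bar\zeta\,\bar f_j'(\bar\zeta)\neq 0$. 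So $\bar f_j$ does not divide $\bar w_j$, and $w_j$ is a unit of the local ring $R_j$.
\end{enumerate}
Equivalently, $R_j/\langle f_j\rangle\cong GR(4,m)/\langle f_j(-\zeta)\rangle\cong\mathbb{F}_{2^m}$, so $2\in\langle f_j\rangle$ and $\mathfrak m_j=\langle f_j\rangle$. This is also what your fallback would have needed to establish. With this lemma inserted, the rest of your argument goes through.

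\textbf{Upper index.} You prove the statement with upper index $2^{a+1}$, not the printed $2^a+1$. Yours is the correct form, since each $R_j$ has $2^{a+1}+1$ ideals, and the printed range misses codes. For example, the paper's own $[4,0,1,8]$ negacyclic code is generated by $2x^3+2x^2+2x+2=2(x+3)^3$. This equals $(x+3)^7$ times a unit in $\mathbb{Z}_4[x]/\langle x^4+1\rangle$, and $7>2^2+1$. So do not call your result ``the form stated''; say that you are proving the corrected statement.
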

This is another theorem that helps significantly with computational efficiency. With this theorem, we can find all negacyclic codes for a particular length \( N \) by reducing the search to an odd \( n = \frac{N}{2^a} < N \), saving computational power and allowing easy access to the divisors through Hensel lifting. To illustrate the process, we  examine \( N = 6 \), which means \( a = 1 \) and \( n = 3 \).
The two factors for \( x^3 - 1 \) are \( x + 3 \) and \( x^2 + x + 1 \). We choose \( g_0(x) = x + 3 \) and \( g_1(x) = x^2 + x + 1 \). Since in this case, we have two divisors, which is less than the required number of terms \( 2^a + 2 = 2^1 + 2 = 4 \), we will take \( g_2(x) = 1 \) and \( g_3(x) = 1 \). For our example, the terms  are
\[
g_0(x) = x + 3, \quad g_1(x) = x^2 + x + 1, \quad g_2(x) = 1, \quad g_3(x) = 1.
\]
\noindent Now, combining these terms, we obtain
\[
g(x) = (g_0(x))^0 (g_1(x))^1 (g_2(x))^2 (g_3(x))^3 = (x + 3)^0 (x^2 + x + 1)^1 (1)^2 (1)^3 = x^2 + x + 1.
\]
Next, we create the circulant matrix of size 6 from this polynomial:
\[
\begin{pmatrix}
1 & 1 & 1 & 0 & 0 & 0 \\
0 & 1 & 1 & 1 & 0 & 0 \\
0 & 0 & 1 & 1 & 1 & 0 \\
0 & 0 & 0 & 1 & 1 & 1 \\
1 & 0 & 0 & 0 & 1 & 1 \\
1 & 1 & 0 & 0 & 0 & 1 \\
\end{pmatrix}
\]
The negacyclic code generated by this matrix has  the parameters $k_1 = 4$, $k_2 = 2$ and  the minimum Lee weight  2.
After exhaustively obtaining all negacyclic codes for all even lengths up to 30, we investigated various methods of obtaining quantum codes from all the $\mathbb{Z}_4$ codes we have constructed.

\subsection{Quantum codes}
Building large scale quantum computers  is a very active area of research. A critical part of this effort is the ability to control quantum errors. There are various methods of constructing quantum error correcting codes (QECC) over finite fields $\Fq$, many of them based on the CSS construction \cite{CSS}.  Codes over extension rings of $\Fq$ have been utilized to construct QECCs over $\Fq$. A large number of works are available in the literature on this approach.  Recently, some binary QECCs have been obtained from the extension rings of $\ZZ$ in  \cite{Dinh2022} and \cite{quantumZ4Ext}.  One of our goals in this work was to obtain some binary quantum codes from codes over $\ZZ$, rather than  from codes over an extension ring of $\ZZ$ as done in the recent papers. We used a method  similar \cite{Dinh2022} and \cite{quantumZ4Ext}. Additionally, we examined the Gray image of a \(\mathbb{Z}_4\)-linear code and applied CSS construction to the image as well as the binary codes associated to any $\ZZ$-linear codes:  torsion and residue codes.
To explain CSS construction, we recall some relevant basic facts about classical codes. For any \( u = (u_1, \ldots, u_n) \) and \( v = (v_1, \ldots, v_n) \) in \( \mathbb{F}_q^n \), the (Euclidean) inner product \( u \cdot v \) is defined as
\[ u \cdot v = u_1 v_1 + u_2 v_2 + \cdots + u_n v_n. \]
The dual code of \( C \) is defined as
\[ C^\perp = \{ v \in \mathbb{F}_q^n : u \cdot v = 0 \text{ for all } u \in C \}. \]
A code $C$ that is contained in its dual, \( C \subseteq C^\perp \),  is called self-orthogonal or weakly self-dual. A code $C$ that contains its dual, \( C^\perp  \subseteq C\), is called dual-containing. If \( C = C^\perp \) then we say that \( C \) is self-dual. If $C \cap C^\perp =\{ 0\}$ then $C$ (and $C^\perp$) is called  a linear complementary dual (LCD) code.

\subsection{CSS Construction}
We can apply the  CSS construction to obtain QECCs  from cyclic, quasi-cyclic, and negacyclic codes. For this, we need two codes such that one is contained in the dual of the other one. Hence for cyclic and constacyclic codes,  this condition is characterized by the ideal inclusion
\begin{align*}
    \langle g(x) \rangle \supseteq \langle g(x)f(x) \rangle.
\end{align*}
So we take $C_2^{\perp}=\langle g(x)f(x)\rangle \subseteq \langle g(x) \rangle =C_1$.
\begin{theorem}[CSS construction \cite{CSS}]
Let  $C_1$ and $C_2$ be two linear codes over $\mathbb{F}_q$ with parameters $[n,k_1,d_1]_q$ and $[n,k_2,d_2]_q$ with $C_2^{\perp}\subseteq C_1.$ Then there exists a QECC with parameters $[[n, k_1 + k_2 - n, \min(d_1, d_2)]]_q$. In case $C_1$ is a dual-containing code, that is, $C_1^{\perp} \subseteq C_1$, there exists a QECC with parameters $[[n, 2k_1 - n, d_1]]_q$.
\end{theorem}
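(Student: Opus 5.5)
The plan is the standard stabilizer-code argument. I will build a stabilizer group from $C_1$ and $C_2$, check that it is abelian, count the dimension of the code space it fixes, and then bound the minimum distance. Throughout, $q=p^m$ and I work with the error basis $X(a)Z(b)$, $a,b\in\Fq^n$, acting on $(\mathbb{C}^q)^{\otimes n}$. These operators satisfy
\[
X(a)Z(b)\,X(a')Z(b') = \omega^{\mathrm{tr}(b\cdot a' - b'\cdot a)}\,X(a')Z(b')\,X(a)Z(b),
\]
where $\omega$ is a primitive $p$-th root of unity and $\mathrm{tr}$ is the trace from $\Fq$ to $\mathbb{F}_p$.

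\textbf{Stabilizer and commutativity.} The stabilizer group $S$ is generated (up to phases) by $\{X(a): a\in C_2^\perp\}$ and $\{Z(b): b\in C_1^\perp\}$. Two $X$-type operators commute, and so do two $Z$-type operators. An $X(a)$ and a $Z(b)$ commute exactly when $\mathrm{tr}(a\cdot b)=0$. The hypothesis $C_2^\perp\subseteq C_1$ gives $C_1^\perp\subseteq C_2$, so $a\cdot b=0$ for every $a\in C_2^\perp$ and $b\in C_1^\perp$. Hence $S$ is abelian.

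\textbf{Dimension.} The group $S$ has order $q^{(n-k_2)+(n-k_1)}$, and the only scalar it contains is the identity. The joint $+1$ eigenspace therefore has dimension
\[
q^n/|S| = q^{k_1+k_2-n}.
\]
Concretely, I can exhibit a basis of coset states
\[
|u+C_2^\perp\rangle = |C_2^\perp|^{-1/2}\sum_{w\in C_2^\perp}|u+w\rangle, \qquad u\in C_1/C_2^\perp .
\]
These states are orthonormal, and there are $q^{k_1-(n-k_2)}$ of them, which gives the stated dimension $k_1+k_2-n$.

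\textbf{Distance.} I take the distance to be the minimum weight of $N(S)\setminus S$. The normalizer $N(S)$ consists of the $X(a)Z(b)$ with $a\perp C_1^\perp$ and $b\perp C_2^\perp$, that is, $a\in C_1$ and $b\in C_2$. A logical operator outside $S$ has $a\in C_1\setminus C_2^\perp$ or $b\in C_2\setminus C_1^\perp$, and its weight is at least $\mathrm{wt}(a)$ or $\mathrm{wt}(b)$ respectively. So it has weight at least $\min(d_1,d_2)$.

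This distance step is the only delicate point. It yields $\min(d_1,d_2)$ only as a lower bound, and that is exactly what the statement needs. It also needs the Knill–Laflamme condition, which I will invoke in the standard form: a stabilizer code detects all errors of weight less than $\min\{\mathrm{wt}(E): E\in N(S)\setminus S\}$.

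\textbf{Dual-containing case.} This follows by taking $C_2=C_1$. The hypothesis becomes $C_1^\perp\subseteq C_1$, and the parameters become $[[n,2k_1-n,d_1]]_q$.
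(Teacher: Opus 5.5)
The paper gives no proof of this theorem. It quotes the result from \cite{CSS} and uses it as a black box, so there is no in-paper argument to compare yours against.

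Your stabilizer-formalism proof is the standard one, and it is correct. The set $S=\{X(a)Z(b): a\in C_2^\perp,\ b\in C_1^\perp\}$ is closed under multiplication with no stray phases, because $b\cdot a'=0$. So it is a genuine abelian group of order $q^{2n-k_1-k_2}$ whose only scalar element is the identity. The coset states $|u+C_2^\perp\rangle$ realize the dimension $q^{k_1+k_2-n}$. Your centralizer computation actually gives the sharper bound
$d\ge\min\{\mathrm{wt}(c): c\in (C_1\setminus C_2^\perp)\cup(C_2\setminus C_1^\perp)\}\ge\min(d_1,d_2)$.
This is the right reading of the statement: $\min(d_1,d_2)$ is a lower bound on the distance, not necessarily its exact value.

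That refined formula is what your self-contained route buys over the bare citation. Relabel $C_2^\perp\mapsto C_1$ and $C_1\mapsto C_2$. Your argument then becomes exactly the other CSS statement the paper quotes, for $C_1\subseteq C_2$ with $d=\min\{\mathrm{dist}(C_2\setminus C_1),\mathrm{dist}(C_1^\perp\setminus C_2^\perp)\}$. That is the version the paper applies to residue and torsion codes.

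Two details deserve an explicit sentence in a written version.
\begin{enumerate}
\item When you pass from ``$X(a)Z(b)$ commutes with $S$'' to $a\in C_1$ and $b\in C_2$, the condition you actually get is $\mathrm{tr}(a\cdot y)=0$ for all $y\in C_1^\perp$, not $a\cdot y=0$. To conclude $a\cdot y=0$ you need the $\Fq$-linearity of $C_1^\perp$ (replace $y$ by $\lambda y$) and the nondegeneracy of the trace form. This matters when $q$ is not prime.
\item $N(S)\setminus S$ must be taken modulo scalars. An element $\omega^c s$ with $s\in S$ normalizes $S$ but acts as a scalar on the code, so it is not a logical error. In the boundary case $k_1+k_2=n$, $N(S)$ equals $S$ up to phases and the minimum is over an empty set. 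There you fall back on the usual $k=0$ convention, taking the least weight of a nonidentity element of $S$, which is again at least $\min(d_1,d_2)$.
\end{enumerate}
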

\vspace{2em}

\subsection{Binary Residue and Torsion Codes of a $\mathbb{Z}_4$ Code}
For every linear code over $\ZZ$, there are two binary linear codes associated with it and they posses a property that makes them suitable in applying the CSS construction.
In order to identify these codes, \cite{Wan1997} introduced the following three maps \(\alpha\), \(\beta\), \(\gamma\) from \(\mathbb{Z}_4\) to \(\mathbb{Z}_2\).
\begin{table}[h!]
\centering
\begin{tabular}{|c|c|c|c|}
\hline
\(\mathbb{Z}_4\) & \(\alpha\) & \(\beta\) & \(\gamma\) \\ \hline
0 & 0 & 0 & 0 \\ \hline
1 & 1 & 0 & 1 \\ \hline
2 & 0 & 1 & 1 \\ \hline
3 & 1 & 1 & 0 \\ \hline
\end{tabular}
\caption{}
\end{table}
\noindent Clearly, \(\alpha\) is an additive group homomorphism from \(\mathbb{Z}_4\) to \(\mathbb{Z}_2\), but \(\beta\) and \(\gamma\) are not. Each element \(x \in \mathbb{Z}_4\) has a 2-adic expansion
\[
x = \alpha(x) + 2\beta(x).
\]
We also have
\[
\alpha(x) + \beta(x) + \gamma(x) = 0 \mod 2 \quad \text{for all } x \in \mathbb{Z}_4.
\]
\noindent The maps \(\alpha\), \(\beta\), \(\gamma\) can be extended to \(\mathbb{Z}_4^n\) in an obvious way. For \(x = (x_1, \ldots, x_n) \in \mathbb{Z}_4^n\), define
\[
\alpha(x) = (\alpha(x_1), \ldots, \alpha(x_n)), \quad
\beta(x) = (\beta(x_1), \ldots, \beta(x_n)), \quad
\gamma(x) = (\gamma(x_1), \ldots, \gamma(x_n)).
\]
Then \(\phi\) is extended to \(\mathbb{Z}_4^n\) as follows:
\[
\phi(x) = (\beta(x), \gamma(x)) \quad \text{for all } x \in \mathbb{Z}_4^n.
\]
Clearly, the extended \(\phi\) is a bijection from \(\mathbb{Z}_4^n\) to \(\mathbb{Z}_2^{2n}\). For any \(x \in \mathbb{Z}_4^n\), \(\phi(x)\) is called the \textit{binary image} of \(x\) under \(\phi\).
\begin{proposition}[\cite{Wan1997}, Proposition 3.18]
Let \( C \) be a \( \mathbb{Z}_4 \)-linear code of length \( n \), and \( C^{(1)} \) and \( C^{(2)} \) be the binary codes defined by:
\[
C^{(1)} = \{ \alpha(c) \mid c \in C \} \quad \text{and} \quad C^{(2)} = \{ \beta(c) \mid c \in C,\ \alpha(c) = 0 \}.
\]
\begin{enumerate}
    \item[(i)] Both \( C^{(1)} \) and \( C^{(2)} \) are binary linear codes, and \( C^{(1)} \subseteq C^{(2)} \).
    \item[(ii)] If \( C \) is of type \( 4^{k_1}2^{k_2} \) and has generator matrix (\ref{eq:gen}), then the residue code \( C^{(1)} \) is a binary linear \([n, k_1]\)-code with generator matrix
    \[
    \left( I_{k_1} \quad A \quad \alpha(B) \right)
    \]
    and the torsion code \( C^{(2)} \) is a binary linear \([n, k_1 + k_2]\)-code with generator matrix
    \[
    \left( \begin{array}{ccc}
    I_{k_1} & A & \alpha(B) \\
    & I_{k_2} & C
    \end{array} \right).
    \]
\end{enumerate}
\end{proposition}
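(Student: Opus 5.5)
Linearity of both codes comes first. The map \(\alpha:\mathbb{Z}_4^n\to\mathbb{Z}_2^n\) is an additive homomorphism, so \(C^{(1)}=\alpha(C)\) is an additive subgroup of \(\mathbb{Z}_2^n\). Over \(\mathbb{Z}_2\) an additive subgroup is the same as a linear code, so \(C^{(1)}\) is linear.

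For \(C^{(2)}\), I would first identify the \(2\)-torsion of \(C\). If \(\alpha(c)=0\), then every coordinate of \(c\) lies in \(\{0,2\}\), so \(c=2\beta(c)\) (read componentwise). Hence \(C^{(2)}=\{v\in\mathbb{Z}_2^n : 2v\in C\}\), where \(2v\) is the lift of \(v\) with entries in \(\{0,2\}\). Since the map \(v\mapsto 2v\) is an injective additive homomorphism \(\mathbb{Z}_2^n\to\mathbb{Z}_4^n\), \(C^{(2)}\) is the preimage of the subgroup \(C\), and so it is linear.

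For the inclusion \(C^{(1)}\subseteq C^{(2)}\), take \(c\in C\) and use linearity to get \(2c\in C\). Then \(2c=2\alpha(c)\), because \(2x=2\alpha(x)\) in \(\mathbb{Z}_4\). Therefore \(\alpha(2c)=0\) and \(\beta(2c)=\alpha(c)\), which shows \(\alpha(c)\in C^{(2)}\).

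For (ii), I may assume \(C\) has the generator matrix (\ref{eq:gen}); permuting coordinates commutes with \(\alpha\) and \(\beta\), so nothing is lost. Write a general codeword as \(c=uG_1+vG_2\), where \(G_1=(I_{k_1}\ A\ B)\), \(G_2=(0\ 2I_{k_2}\ 2C)\), \(u\in\mathbb{Z}_4^{k_1}\) and \(v\in\mathbb{Z}_4^{k_2}\). Reducing mod \(2\) gives \(\alpha(c)=\alpha(u)(I\ A\ \alpha(B))\), since \(A\) is binary and \(\alpha\) is a ring homomorphism on \(\mathbb{Z}_4\to\mathbb{Z}_2\). As \(\alpha(u)\) ranges over all of \(\mathbb{Z}_2^{k_1}\), the residue code is generated by \((I_{k_1}\ A\ \alpha(B))\). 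The identity block makes these rows independent, so the dimension is \(k_1\).

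For the torsion code, \(\alpha(c)=0\) forces \(\alpha(u)=0\), by looking at the first \(k_1\) coordinates. So \(u=2u'\) with \(u'\) binary, and then \(c=2u'G_1+vG_2=2\bigl(u'(I\ A\ \alpha(B))+\alpha(v)(0\ I\ C)\bigr)\), using \(2B=2\alpha(B)\) and \(2\cdot 2=0\). Hence \(\beta(c)=u'(I\ A\ \alpha(B))+\alpha(v)(0\ I\ C)\). Conversely, every choice of binary \(u',\alpha(v)\) arises this way. So the stated matrix generates \(C^{(2)}\), and its block upper-triangular identity structure gives rank \(k_1+k_2\).

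The main delicate point is this torsion computation: one has to check that multiplying by \(2\) kills exactly the non-binary parts, so that the binary matrices \(A\) and \(C\) can be used directly. This is precisely where \(A\) and \(C\) being \(\mathbb{Z}_2\)-matrices matters.
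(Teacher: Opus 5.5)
The paper gives no proof of this proposition; it is quoted directly from \cite{Wan1997}, so there is no in-paper argument to compare against. Your proof is correct and complete, and it is the standard direct verification. You use that $\alpha$ is reduction mod $2$, identify $C^{(2)}$ with $\{v : 2v\in C\}$, get the inclusion from $\beta(2c)=\alpha(c)$, and handle (ii) by writing $c=uG_1+vG_2$ and using $2B=2\alpha(B)$. Your opening remark about permuting coordinates is harmless but unnecessary, since (ii) already assumes the generator matrix has the form (\ref{eq:gen}).
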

\noindent Since \(C^{(1)} \subseteq C^{(2)}\), they can be directly used in the CSS construction. There is also a specific CSS construction that could be used with this condition:
\begin{theorem}[\cite{CSS}, Theorem 9]
    Let $C_1 \subseteq C_2$ be binary linear codes. We obtain an $\left[\left[n, k_2 - k_1, d\right]\right]$ code, where
\[
d = \min\{\text{dist}(C_2 \setminus C_1),\ \text{dist}(C_1^\perp \setminus C_2^\perp)\}.
\]
\end{theorem}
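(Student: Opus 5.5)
The plan is the standard stabilizer (CSS) argument: build an $[[n,k_2-k_1,d]]$ code as a stabilizer code from two mutually orthogonal binary codes, then read off the distance from the stabilizer-code distance formula. Since $C_1\subseteq C_2$, taking duals gives $C_2^\perp\subseteq C_1^\perp$. I will use $X$-type stabilizers indexed by a basis of $C_1$ and $Z$-type stabilizers indexed by a basis of $C_2^\perp$, i.e.\ the stabilizer group
\[
S=\{X^{a}Z^{b} : a\in C_1,\ b\in C_2^\perp\}.
\]
The group $S$ is abelian: $X^aZ^b$ and $X^{a'}Z^{b'}$ commute exactly when $a\cdot b'+a'\cdot b=0$. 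This holds because $a\in C_1\subseteq C_2$ and $b'\in C_2^\perp$, so both inner products vanish. The generators are independent, since $-I\notin S$, and their number is $k_1+(n-k_2)$. Hence the code space has dimension $2^{\,n-k_1-(n-k_2)}=2^{k_2-k_1}$, which gives $k_2-k_1$ logical qubits.

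Next I compute the normalizer. A Pauli $X^{u}Z^{v}$ commutes with every element of $S$ if and only if $u\cdot b=0$ for all $b\in C_2^\perp$ and $v\cdot a=0$ for all $a\in C_1$. That is, $u\in (C_2^\perp)^\perp=C_2$ and $v\in C_1^\perp$. So the normalizer is $N(S)=\{X^uZ^v: u\in C_2,\ v\in C_1^\perp\}$ up to phases.

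The distance of a stabilizer code is the minimum weight of an element of $N(S)\setminus S$. An element $X^uZ^v$ lies in $S$ exactly when $u\in C_1$ and $v\in C_2^\perp$. So a nontrivial logical operator has $u\in C_2\setminus C_1$ or $v\in C_1^\perp\setminus C_2^\perp$. Its weight $|\mathrm{supp}(u)\cup\mathrm{supp}(v)|$ is at least $\mathrm{wt}(u)$ and at least $\mathrm{wt}(v)$. Therefore the minimum weight is at least $\min\{\text{dist}(C_2\setminus C_1),\ \text{dist}(C_1^\perp\setminus C_2^\perp)\}$. The minimum is attained by pure $X^u$ or pure $Z^v$ operators, so equality holds; it is here that ``dist'' means minimum weight of the set difference, which is the degenerate-code convention.

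The only delicate point is this distance step. Stabilizer elements must be excluded, and $u$ and $v$ are not independent in their contribution to weight. Bounding by each component separately handles this cleanly. I will state the encoding via the cosets $|x+C_1\rangle$ for $x\in C_2$ as the alternative explicit description, with $|C_2/C_1|=2^{k_2-k_1}$.
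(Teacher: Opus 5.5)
Your proof is correct. The paper gives no proof of this statement; it simply quotes it from \cite{CSS}. What you have written is the standard derivation behind that citation. In the symplectic (equivalently, $\mathbb{F}_4$-additive) language in which the result is usually proved, your $S$ is the self-orthogonal code $C_1\times C_2^{\perp}\subseteq\mathbb{F}_2^{2n}$, and your normalizer is its symplectic dual $C_2\times C_1^{\perp}$. The distance is the minimum of $|\mathrm{supp}(u)\cup\mathrm{supp}(v)|$ over the set difference, bounded below componentwise exactly as you do.

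One small repair: independence of the generators does not follow from $-I\notin S$. What you need is the identity $X^aZ^b\,X^{a'}Z^{b'}=(-1)^{b\cdot a'}X^{a+a'}Z^{b+b'}=X^{a+a'}Z^{b+b'}$, which holds because $b\in C_2^{\perp}$ and $a'\in C_1\subseteq C_2$. This single computation shows three things at once: $S$ is closed under multiplication, so it really is a group; $(a,b)\mapsto X^aZ^b$ is a group isomorphism $C_1\times C_2^{\perp}\to S$, so $|S|=2^{k_1+n-k_2}$; and $-I\notin S$.
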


\section{Computational Results}
\label{sec:cr}
Our computer searches yielded many new linear codes over \(\mathbb{Z}_4\). A \(\mathbb{Z}_4\)-linear code is new if either its parameters do not appear in the database \cite{Z4database}, or are better than the parameters of the comparable code on the database. Of the new codes we have found,  2500 of them are  from cyclic codes of  oddly even lengths. Table 2 below shows a small sample of these new codes. In the tables below, a polynomial is represented as a list of coefficients in descending order of the terms to save space. For example, the string ``13'' on the first row of Table 3 represents the polynomial \( f(x) = x+3 \).
In Table 3, the polynomials $f$ and $g$ are such that $f=a_1(X^2) a_2(X^2) a_3(X^2)$ and $g=2a_1(X^2) a_2(X) b(X)$ are the generator polynomials for the codes whose parameters are displayed in the first column. The generator polynomials are obtained from Theorem 4.2. A total of 2500 record breaking codes have been identified in the category of oddly even length codes.
As for the free codes in Table 4, each code's generator $f$ is chosen such that $f(x)=g(x) + 2p(x)$ is a divisor of $x^N-1$ which is the first case of Theorem 4.3. In this case, we have been able to obtain codes with best known parameters. No record-breakers were found mainly because most free codes up until our exhaustive search cap of 38  were already part of the database.
Finally, and for the negacyclic codes in Table 5, each code can be generated from the polynomial $g$, constructed according to  Theorem 4.4. A total of 730 record breakers have been identified in the negacyclic category.

\begin{table}[ht]
\caption{A sample of new codes from cyclic codes of oddly even length}
\centering
\renewcommand{\arraystretch}{1.5}
\setlength{\tabcolsep}{15pt}
\begin{tabular}{|l|c|c|}
\hline
$[N,k_1,k_2,d]$  & $f$ & $g$  \\
\hline
$[6, 5, 1, 2]$ & $13$ & $2$ \\ \hline
$[10, 6, 4, 2]$ & $11111$ & $2$  \\ \hline
$[14, 4, 1, 10]$ & $10310221033$ & $2202222002$ \\ \hline
$[18, 11, 1, 4]$ & $11213013$ & $2002002$  \\ \hline
$[22, 10, 10, 4]$ & $1100000000033$ & $202$ \\ \hline
$[26, 12, 12, 4]$ & $110000000000033$ & $202$  \\ \hline
$[30, 14, 4, 8]$ & $11200000022020213$ & $2020020200202$ \\ \hline
$[34, 16, 10, 4]$ & $1302000000202220213$ & $222020222$ \\ \hline
$[38, 19, 1, 4]$ & $12200000000000000003$ & $2222222222222222222$ \\ \hline
\end{tabular}
\end{table}

\begin{table}[ht]
\caption{Some good codes from free cyclic codes of even lengths}
\centering
\renewcommand{\arraystretch}{1.5}
\setlength{\tabcolsep}{15pt}
\begin{tabular}{|l|c|}
\hline
$[n,k_1,k_2,d]$ & $f$  \\
\hline
$[4, 1, 0, 4]$  & $3131$ \\ \hline
$[6, 2, 0, 6]$  & $13211$ \\ \hline
$[8, 1, 0, 8]$  & $31313131$ \\ \hline
$[10, 4, 0, 8]$  & $3320211$ \\ \hline
$[12, 2, 0, 8]$  & $33011033011$ \\ \hline
$[14, 3, 0, 12]$  & $311230013321$ \\ \hline
$[16, 2, 0, 16]$  & $121012101210121$ \\ \hline
$[18, 2, 0, 18]$  & $13211013211013211$ \\ \hline
$[20, 8, 0, 8]$  & $1212202202101$ \\ \hline
$[22, 10, 0, 8]$  & $3100202200011$ \\ \hline
\end{tabular}
\end{table}

\begin{table}[ht]
\caption{Record breaking negacyclic codes}
\centering
\renewcommand{\arraystretch}{1.5}
\setlength{\tabcolsep}{15pt}
\begin{tabular}{|l|c|}
\hline
$[n,k_1,k_2,d]$ & $g$  \\
\hline
$[2, 0, 1, 4]$  & $22$ \\ \hline
$[4, 0, 1, 8]$  & $2222$ \\ \hline
$[6, 2, 2, 4]$  & $132310$ \\ \hline
$[10, 4, 4, 4]$  & $113303311$ \\ \hline
$[12, 1, 3, 6]$  & $331111333311$ \\ \hline
$[14, 0, 9, 8]$  & $2220020000000$ \\ \hline
$[18, 1, 7, 8]$  & $131313133111111113$ \\ \hline
$[20, 2, 6, 8]$  & $12101210303201032101$ \\ \hline
$[22, 10, 10, 4]$  & $113300000003311$ \\ \hline
$[34, 0, 10, 20]$  & $2222222220002022200000000022202000$ \\ \hline
$[36, 2, 8, 16]$  & $303230323030321210301230123010323210$ \\ \hline
$[38, 1, 1, 38]$  & $31133113311331133113311331133113311331$ \\ \hline
\end{tabular}
\end{table}

\newgeometry{left=1cm, right=1cm, top=2cm, bottom=2cm}
\noindent We also obtained quantum codes from $\ZZ$ cyclic, quasi-cyclic, and negacyclic codes  with even length. By applying the CSS construction, we have obtained QECCs with either optimal parameters according to \cite{database}, or  the same parameters as best-known codes documented in the literature, with additional properties such as being reversible, dual containing, and self-orthogonal. These are presented in Table 6, with the additional properties denoted by acronyms: reversible (R), dual containing (DC), and self-orthogonal (SO).

\renewcommand{\arraystretch}{2}
\begin{table}[h]
\begin{center}
\begin{minipage}{\textwidth}
\caption{Quantum Codes from classical $\ZZ$ codes with even length}\label{tab:q1}
\begin{tabular*}{\textwidth}{@{\extracolsep{\fill}}| l|l|l|l|l|l|@{\extracolsep{\fill}}}
\toprule
$[[n,k,d]]$  & $f$ & $g$ & Parameters & References & Additional Properties \\
\hline
$[[12,8, 2]]^{+}$ & 131313 & 222222&[12, 2, 6]&\cite{database} & R and SO\\ \hline
$[[12,10, 2]]^{*+}$ & 13 & 2 &[12, 11, 2]&\cite{database} & DC and SO\\ \hline
$[[20,18, 2]]^{*+}$ &0&2222222222&[20, 1, 20]&\cite{database} & R and SO\\ \hline
$[[20,16, 2]]^{+}$ &1111111111&2222222222&[20, 2, 10]&\cite{database}& R and SO\\ \hline
$[[20,2, 6]]^{*}$ &1120233&202&[20, 9, 4]&\cite{database} & R and SO \\ \hline
$[[28,22, 2]]$ &1213&2&[28, 25, 2]&\cite{database}&DC  \\ \hline
$[[28,24, 2]]$ &11111111111111&22222222222222&[28, 2, 14]&\cite{database}& R and SO \\ \hline
$[[28,26, 2]]^{*}$ &13&2&[28, 27, 2]&\cite{database} & R and DC \\ \hline
$[[36,30, 2]]$ &0&2002002002002002&[36, 3, 12]&\cite{database} & R and SO \\ \hline
$[[36,32, 2]]$  &111&2&[36, 34, 2]&\cite{database}&R and DC \\ \hline
$[[36,34, 2]]^{*}$ &13&2&[36, 35, 2]&\cite{database} &R and DC \\ \hline
$[[44,40, 2]]^{+}$ &1313131313131313131313&2222222222222222222222&[44, 2, 22]&\cite{database} &R and SO \\ \hline
$[[44,42, 2]]^{*+}$ &13&2&[44, 43, 2]&\cite{database} &R and DC\\ \hline
$[[52,48, 2]]$ &0&2020202020202020202020202&[52, 2, 26]&\cite{database} & R \\ \hline
\end{tabular*}
\end{minipage}
\end{center}
*: best known quantum codes\\
+: these codes are also obtainable using negacyclic codes
\end{table}

\clearpage
\newgeometry{left=1cm, right=1cm, top=2cm, bottom=2cm}
\noindent We also obtained quantum codes with the same parameters as best-known codes using Proposition 3.18 from \cite{quantumZ4Ext}. These are given in Table 7.

\renewcommand{\arraystretch}{2}
\begin{table}[h]
\begin{center}
\begin{minipage}{\textwidth}
\caption{Quantum Codes from Proposition 4.6}\label{tab:q2}
\begin{tabular*}{\textwidth}{@{\extracolsep{\fill}}| l|l|l|l|l|l|@{\extracolsep{\fill}}}
\toprule
$[[n,k,d]]$  & $f$ & $g$ & $C^{(1)}$ & $C^{(2)}$ & $C$\\
\hline
$[[6, 4, 2]]$ & 131313 & 22 &[6, 1, 6]& [6, 5, 2] &$((6, 4^1\ 2^{4}))$\\ \hline
$[[10, 8, 2]]$ & 1111111111 & 22 &[10, 1, 10]& [10, 9, 2] &$((10, 4^1\ 2^{8}))$ \\ \hline
$[[14, 12, 2]]$ & 11111111111111 & 22 &[14, 1, 14]& [14, 13, 2] &$((14, 4^1\ 2^{12}))$ \\ \hline
$[[18, 16, 2]]$ & 131313131313131313 & 22 &[18, 1, 18]& [18, 17, 2] &$((18, 4^1\ 2^{16}))$ \\ \hline
$[[22, 20, 2]]$ & 1313131313131313131313 & 22 &[22, 1, 22]& [22, 21, 2] &$((22, 4^{1}\ 2^{20}))$ \\ \hline
$[[26, 24, 2]]$ & 13131313131313131313131313 & 22 &[26, 1, 26]& [26, 25, 2] &$((26, 4^{1}\ 2^{25}))$ \\ \hline
$[[30, 28, 2]]$ & 131313131313131313131313131313 & 22 &[30, 1, 30]& [30, 29, 2] &$((30, 4^{1}\ 2^{29}))$ \\ \bottomrule
\end{tabular*}
\end{minipage}
\end{center}
\end{table}

\noindent From the computational results, we made the following observation regarding quantum codes obtained via the binary codes $C^{(1)}$ and $C^{(2)}$ associated with a quaternary code described in \cite{quantumZ4Ext}. First, we recall the definition of an MDS code. One of the elementary bounds on the parameters $[n,k,d]$ of a classical linear code over a finite  field is the Singleton bound which states  $k\leq n-d+1$. Codes whose parameters attain the equality in this bound are called MDS codes which are optimal. The quantum version of the Singleton bound  is $k\leq n-2d+1$   \cite[Theorem~23]{CSS}. A quantum code whose parameters attain the equality in this bound is called a quantum MDS code.
\begin{theorem}
    Using Proposition 3.18 in \cite{quantumZ4Ext} on a \(\mathbb{Z}_4\)-linear code \(\mathcal{C}\) of length \(n\), we  always obtain a quantum code with parameters  $[[n, n-2, 2]]$ which is a quantum MDS code.
\end{theorem}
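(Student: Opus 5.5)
The plan is to read the statement in the setting in which it is used in Table 7. There $\mathcal{C}=\langle f(x),g(x)\rangle$ is a cyclic code of even length $n$ with $f(x)\equiv 1+x+\cdots+x^{n-1}\pmod 2$ (the strings $11\cdots1$ and $1313\cdots13$) and $g(x)=2(x+1)$ (the string $22$). Its associated binary codes are the repetition code $C^{(1)}=[n,1,n]$ and the even-weight code $C^{(2)}=[n,n-1,2]$. The argument then has two parts. First, compute $C^{(1)}$ and $C^{(2)}$ directly from the generators. Second, feed the nested pair $C^{(1)}\subseteq C^{(2)}$ into the CSS construction of Theorem 9 of \cite{CSS}, which is the mechanism behind Proposition 3.18 of \cite{quantumZ4Ext} (a dimension $k_2-k_1$ and a distance given by the two set-difference minima). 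Literally "every" $\mathbb{Z}_4$-linear code cannot yield $[[n,n-2,2]]$; for instance $\mathcal{C}=\mathbb{Z}_4^n$ gives $C^{(1)}=C^{(2)}$ and $k=0$. So the first step is to make precise the hypothesis: $\mathcal{C}$ has type $4^1 2^{n-2}$, its residue code is the repetition code, and its torsion code is the even-weight code.

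\emph{Step 1: the two binary codes.} Since $\alpha$ is reduction mod $2$, $\alpha(f)$ is the all-ones vector $\mathbf{1}$, and all cyclic shifts of $f$ reduce to the same vector. Every element of $\mathcal{C}$ has the form $a(x)f(x)+2b(x)(x+1)$, whose residue is $\alpha(a)(x)\cdot\mathbf{1}$, which is $0$ or $\mathbf{1}$. Hence $C^{(1)}=\{0,\mathbf{1}\}$.

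For the torsion code, $2\,\mathcal{C}\cap 2\mathbb{Z}_4^n$ contains $2(x+1)x^i$ for all $i$, so $C^{(2)}$ contains the binary cyclic code $\langle x+1\rangle$. That code is the even-weight code of dimension $n-1$. Because $n$ is even, $\mathbf{1}$ has even weight, so $C^{(2)}$ also contains $C^{(1)}$, consistent with part (i) of Proposition 3.18 of \cite{Wan1997}. Conversely, $C^{(2)}\neq\mathbb{F}_2^n$, since the type is $4^12^{n-2}$ and part (ii) gives $\dim C^{(2)}=k_1+k_2=n-1$. Therefore $C^{(2)}$ is exactly the even-weight code.

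\emph{Step 2: parameters of the quantum code.} Theorem 9 of \cite{CSS} gives dimension $k=\dim C^{(2)}-\dim C^{(1)}=(n-1)-1=n-2$. For the distance, the dual codes are $(C^{(1)})^\perp$, the even-weight code, and $(C^{(2)})^\perp=\{0,\mathbf{1}\}$. Both set differences $C^{(2)}\setminus C^{(1)}$ and $(C^{(1)})^\perp\setminus (C^{(2)})^\perp$ equal the even-weight vectors other than $0$ and $\mathbf{1}$. Every vector in this set has weight at least $2$, and for $n\ge 4$ it contains $e_1+e_2$, so $d=2$. For $n=2$ the set is empty and the code is degenerate; I would exclude this case or treat it separately.

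\emph{Step 3: MDS property.} It remains to compare $[[n,n-2,2]]$ with the quantum Singleton bound. Here lies the main obstacle: as quoted, the bound reads $k\le n-2d+1$, and $n-2=n-4+2$ actually exceeds that. So I would restate the quantum Singleton bound in its standard form $k\le n-2d+2$ (Knill--Laflamme), which is the version in \cite[Theorem~23]{CSS} for codes with $d\ge 2$. With this form, equality holds for $k=n-2$, $d=2$, so the code is quantum MDS. The rest of the proof is routine; the care goes into stating the hypothesis on $\mathcal{C}$ precisely and using the correct form of the Singleton bound.
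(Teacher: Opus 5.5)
Your proposal is correct and follows the same route as the paper: identify $C^{(1)}$ as the $[n,1,n]$ repetition code and $C^{(2)}$ as the $[n,n-1,2]$ even-weight code, then apply the CSS construction (Theorem 9 of \cite{CSS}) to the nested pair to obtain $[[n,n-2,2]]$. Your added care is justified. The paper's proof imposes type $4^12^{n-2}$, which is absent from the statement, and that type alone does not force $C^{(1)}=[n,1,n]$ and $C^{(2)}=[n,n-1,2]$; those parameters are exactly your repetition and even-weight hypothesis, which forces $n$ even. The paper also never computes the two set-differences, and it quotes the quantum Singleton bound as $k\le n-2d+1$, which $[[n,n-2,2]]$ would violate, so your form $k\le n-2d+2$ is what makes the MDS claim hold. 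One minor slip: in Step 1 you mean $\mathcal{C}\cap 2\mathbb{Z}_4^n$, not $2\,\mathcal{C}\cap 2\mathbb{Z}_4^n$.
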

\begin{proof}
    Using Proposition 3.18 in \cite{quantumZ4Ext} on a \(\mathbb{Z}_4\)-linear code \(\mathcal{C}\) of length \(n\) and of type $4^1 2^{n-2}$, we obtain 2 binary codes with parameters $C^{(1)} = [n,1,n]$ and $C^{(2)} = [n,(n-2)+1, 2]= [n,n-1, 2]$. Since  $C^{1} \subseteq C^{2}$, we can use the CSS construction given in Theorem 4.7. Thus, we obtain a quantum code with parameters $[[n, k_2 - k_1, d]]= [[n, (n-1)-1, \min\{\text{dist}(C_2 \setminus C_1), \text{dist}(C_1^\perp \setminus C_2^\perp)\}]] = [[n, n-2, 2]]$, which is an MDS code.
\end{proof}
\noindent We also obtain some QECCs from negacyclic codes using CSS construction. These are summarized in Table 8.

\renewcommand{\arraystretch}{2}
\begin{table}[h]
\begin{center}
\begin{minipage}{\textwidth}
\caption{Quantum Codes from negacyclic codes of even length over $\ZZ$}\label{tab:q3}
\begin{tabular*}{\textwidth}{@{\extracolsep{\fill}}| l|l|l|l|l|@{\extracolsep{\fill}}}
\toprule
$[[n,k,d]]$  & $f$ & Parameters & References & Additional Properties\\
\hline
$[[24,22, 2]]^{*}$ &13&[24,23, 2]&\cite{database}& Reversible and Dual Containing  \\ \hline
$[[24,20, 2]]$ &121&[24,22, 2]&\cite{database}& Reversible and Dual Containing  \\ \hline
$[[24,18, 2]]$ &1133&[24,21, 2]&\cite{database}& Reversible and Dual Containing  \\ \hline
$[[40,38, 2]]^{*}$ &13&[40,39, 2]&\cite{database}& Reversible and Dual Containing \\ \hline
$[[40,36, 2]]$ &121&[40,38, 2]&\cite{database}& Reversible and Dual Containing \\ \hline
$[[40,34, 2]]$ &1133&[40,37, 2]&\cite{database}& Reversible and Dual Containing \\ \hline
\end{tabular*}
\end{minipage}
\end{center}
*: best known quantum codes.
\end{table}

\clearpage
\restoregeometry

\section{Concluding Remarks}
\label{sec:con}
By conducting computer searches to find new cyclic and negacyclic codes of even length over $\ZZ$, we  have found 3270 new quaternary linear codes  that have been added to the database \cite{Z4database}. Moreover, we used elementary methods to obtain binary quantum codes with best known parameters from  $\mathbb{Z}_4$ codes. Our work can be extended by finding ways to handle computer searches for cyclic and negacyclic codes of larger lengths, and  exploring other generalizations of  cyclic codes such as quasi-twisted (QT) codes, polyclic codes, and generalized polycyclic codes of even lengths over $\ZZ$.

\section*{Acknowledgments}
We thank the Kenyon Summer Science Scholars program for  funding this research.


\end{document}